\newtheorem{theorem}{Theorem}
\newtheorem{corollary}{Corollary}
\begin{document}
\makeRR   

\section{Introduction}
\label{sec:in}

Upload bandwidth is one of the main bottleneck in peer-to-peer (P2P) content distribution, which relies on the upload capacity of its participants to achieve its purpose. The upload resource is all the more critical since most todays high speed Internet access are asymmetric DSLs connections that are not designed to handle P2P traffic and offer relatively low upload capacity, with typical uplink/downlink ratios between $1/4$ and $1/20$. On the one hand, the democratization of very high speed, symmetric, Internet access like FTTH is expected to improve the upload capacity of P2P systems, but on the other hand the evolution of content quality standards makes the requirements in terms of content size and rate higher and higher: earlier video feeds on the Internet where low quality, requiring streamrates of a few hundred kbps, whereas HDTV implies rate of up to 20 Mbps, possibly more with the upcoming of 3D video content. It is therefore likely possible that upload will still be a major bottleneck of tomorrow's P2P content distribution.

\subsection{Motivation}

In order to increase the available resources, a standard P2P technique is to leverage the capacity of the system by using \emph{seeders}, \emph{i.e.} peers that contribute to the system but are (currently) not needing anything. Using seeders is quite natural for file-sharing or Video-on-Demand: after a peer has downloaded its file or video, it becomes a potential seeder for that content. However, it is counter-intuitive live streaming systems: ``live'' content is created on the fly, so it cannot be pro-actively possessed by peers. Therefore, for a peer to act as a seeder, it has to receive at least a part of the corresponding content, which it does not want to watch by definition.

\subsection{Scope and contribution}

The goal of this paper is to describe the feasibility and performance one can expect from P2P live seeding from a bandwidth budget perspective. This generic theoretical framework can be used to derive simple dimensioning rules and recommendations for the design of P2P live streaming with seeders.

In details, we analyze the seeders' efficiency, which is the useful throughput (goodput) they add to the system, compared to their upload capacity. We provide explicit, tight, upper bounds for efficiency, taking the overhead explicitly into account. We also address the aggregation issues that come from using several seeders. We give conditions and simple diffusion schemes that allow to nearly achieve the theoretical bounds, and provide a few simple examples that illustrate the potential of our findings.

%

\paragraph*{Remark} focusing on a single scenario (live streaming) and a single type of peer (seeders) was a deliberate choice, in order to get a clean framework for investigating theoretical performance, especially with regards to the overhead modeling aspects. This does not preclude of possible extensions of the approach presented here to other use cases.

\subsection{Roadmap}

The next Section introduces the models that we use to derive our results. The related work with respect to P2P bandwidth dimensioning is briefly exposed in Section~\ref{sec:related}. In Section \ref{sec:def_eff}, a formal definition of seeders' efficiency is proposed. Section~\ref{sec:optimal} proposes a preliminary study of efficiency for two overhead-free models. This study is a starting point for the main results of this paper, which derive the efficiency of seeders in a model with explicit overhead (Section~\ref{subsec:optimal}). The validity conditions and applications of the results are discussed in Section~\ref{sec:discussion}. Section \ref{sec:conclusion} concludes.

\section{Model}
\label{sec:model}

We consider a live content that needs to be streamed to a set of users at a constant rate $r$. The delivery is handled by a P2P live streaming system.
%
The specificity of live streaming is that the content cannot be prefetched. A play-out buffer may tolerate some jitter, but the live constraints usually limit the size of that buffer to less than a few seconds, so a conservative, yet realistic assumption is that content must received at exactly the rate $r$ during the whole watching experience.
To compare with,
filesharing usually requires no minimal rate, while in the case of Video-on-Demand, content may be prefetched at a rate greater than $r$.



\subsection{$C/S/L$ systems}


We classify the nodes of the system into three categories:
\begin{itemize}
    \item \emph{Central servers} are in charge of injecting initial copies of the stream into the system.
    We assume they have a cumulated bandwidth capacity that allows to inject $N_C$ copies of the stream, with $N_C\geq 1$.
    \item \emph{Leechers} are peers that want to watch the live content.
    \item \emph{Seeders}\footnote{\label{foot:leech} The terms \emph{leecher} and \emph{seeder} comes from the BitTorrent vocabulary~\cite{cohen03incentives}.}  are peers that do not want to watch the live content, but can provide bandwidth to the system.
\end{itemize}

\paragraph*{Remark} we do not focus on the way seeders could be enforced in a real live streaming system. However, most of the ideas from P2P filesharing or VoD systems should apply to P2P live streaming. For instance:
\begin{itemize}
	\item Some peers may remain connected to the system even when idle.
	\item In a multi-channel system, leechers from an overprovisioned channel may act as seeders for another channel that lacks resources.
	\item A share-ratio policy can encourage the peers to seed: peers that do not offer enough instant bandwidth may have to act as seeders for a while in order to ``pay'' their bandwidth debt. That kind of policy can be enforced through penalties (no service guarantee, reduced catalog) and rewards (higher QoS, access to premium content).
	\item In the case of networks managed by some ISP or content provider, managed seeders may be deployed by the provider to enhance the system performance.
\end{itemize}


We denote by $C$, $L$ and $S$ the sets of servers, leechers and seeders respectively. The number of leechers (resp. seeders) is denoted by  $N_L$ (resp. $N_S$).
Every peer $p$ in $L$ or $S$ has an upload capacity $u_p\geq 0$ devoted to the service. We assume that the download capacity is always sufficient to support the content rate $r$ and a possible overhead. $U_X$ and $\bar{u}_X$ are respectively the total and average upload bandwidths of set $X$ ($\bar{u}_X=\frac{U_X}{N_X}$). 

Note that the bandwidth distribution of the seeders may differ from the one of the leechers. For instance, if seeders are former leechers forced to remain because of some share-ratio policy, low bandwidth peers will have to seed longer \cite{benbadis08playing}, so the average seeders' bandwidth will be lower than the leecher's one. One the other hand, seeders deployed by some content provider should probably have higher bandwidths.


A \emph{diffusion scheme} for the system is a policy that describes how the content is distributed. We assume here static diffusion schemes: between any two peers (or servers) $p$ and $q$, the scheme gives a stream of goodput $0\leq r_{p,q}\leq r$ that is sent from $p$ to $q$. If $0<r_{p,q}< r$, $r_{p,q}$ is called a substream. For convenience, we consider that the substreams received by a given peer are non-overlapping, so a peer $p$ receives an input of rate
\begin{equation}
i_p=\sum_{q \in \{L,S,C\}}r_{q,p}\text{.}
\label{eq:input_sum}
\end{equation}

\paragraph*{Remark} overlapping substreams can always be seen as non-overlapping ones: if $r_{p,q}$ and $r_{s,q}$ are overlapping, with redundant data of rate $r_{p\cap s,q}$, we just have to consider $\tilde{r}_{p,q}:=r_{p,q}-r_{p\cap s,q}$ and see a rate $r_{p\cap s,q}$ from $p$ to $q$ as overhead. Of course, choosing which redundant data is treated as overhead is arbitrary.

Servers apart, a node cannot send something it doesn't possess, so a diffusion scheme verifies the condition
\begin{equation}
\forall p,q\in \{L,S\}, r_{p,q}\leq i_p\text{.}
\label{eq:input_feasibility}
\end{equation}

A scheme is a \emph{solution} of the live diffusion if it ensures that all leechers can view the content, i.e.
\begin{equation}
\forall p\in \{L\}, i_p=r\text{.}
\label{eq:def_solution}
\end{equation}

%
%
%
%

\subsection{Connectivity}

In this work, we use an explicit linear overhead to account for connectivity constraints. We also propose two simpler models that will serve for didactic purposes: perfect systems and limited fanout systems.


\subsubsection{Perfect systems}

In perfect systems, peers can arbitrarily use the upload capacity devoted to the service at no cost~\cite{liu08performance}. In particular, a perfect system possesses the following properties:
\begin{itemize}
	\item \emph{No overhead}: all the bandwidth capacity can be used to effective data transfer (goodput);
	\item \emph{Unlimited fanout}: one peer can send data to an arbitrary numbers of other peers simultaneously;
	\item \emph{Stream continuity}: the live stream can be divided into arbitrary small substreams of constant rate.
\end{itemize}

\subsubsection{Limited fanout}

As we will see in Section \ref{sec:optimal}, optimizing perfect systems often leads to full mesh solutions, which are not very practical. A first idea to make the model more realistic, without explicitly considering the overhead, is to assume that the number of non-null substreams $r_{p,q}$ is limited: each peer $p$ has a limit $c_p$ on the number of outgoing connections it can sustain. This limited fanout implicitly acknowledges the fact that managing a connection has a cost. Perfect systems correspond to the extreme case $c_p=\infty$, for all $p\in \{L,S\}$).


\subsubsection{Explicit linear overhead}

In order to get a more realistic and flexible model of real systems, we propose to assume that the overhead is linear: the actual bandwidth used for sending some content at rate $e$ from one peer to another is $(1+a)e+b$, for some constants $a,b\geq0$. $a$ is the \emph{proportional cost} and $b$ the \emph{additive cost}. For simplicity, we consider that the overhead cost is supported by the sender only (this assumption will be discussed in \ref{sssec:receiver_overhead}). 

The motivation for this model is that most existing sources of overhead are, at least in a rough approximation, proportional or additive:
\begin{itemize}
	\item Periodic signaling messages (keep-alive, overlay maintenance) are additive;
	\item In chunk-based systems, the stream is split into atomic units of data (the chunks) that are distributed independently. For a constant chunk size, the signaling for sending one chunk is expected to induce a proportional overhead;
	\item The cost for initiating a connection, averaged over the lifetime of that connection, can be considered as additive;
	\item Some randomized diffusion scheme can have a non-null probability to to send useless data, because it is outdated or redundant~\cite{bonald08epidemic}. This can be considered as proportional overhead.
\end{itemize}

Under the linear overhead model, a peer of bandwidth $u$ maintaining $c$ outgoing connections has a useful output limited to $\frac{u-bc}{1+a}$. For $b>0$, $\lfloor\frac{u}{b}\rfloor$ is the maximal fanout sustainable by that peer.
 For $b=0$, the model is indeed equivalent to perfect systems, except that all bandwidth capacities have to be normalized by $\frac{1}{1+a}$.




The notation used is summarized in Table \ref{tab:notation}.

\begin{table}[h]
	\centering
\caption{Table of notation}
		\label{tab:notation}
		\begin{tabular}{|c|l|}
		\hline
		$r$ & Streamrate of the content (constant) \\
		$u_p$ & Available upload bandwidth of peer $p$ \\
			$U_X/\bar{u}_X$ & Total/average upload capacity of population $X$ \\
			\hline
  		$N_X$ & Number of nodes in $X$ \\
  		$N_C$ & Normalized capacity of servers ($U_C=N_CR$) \\
  		$i_p$ & Input rate of node $p$ \\
  		$r_{p,q}$ & Substream from $p$ to $q$ \\
  		$\eta_d(X)$ & Efficiency of set $X$ in diffusion scheme $d$\\
  		$c_p$ & Fanout of peer $p$\\
  		$a$ & Proportional cost of a connection\\
  		$b$ & Constant cost of a connection\\
  		$R:=(1+a)r+b$ & Bandwidth consummed by goodput $r$\\
			\hline
		\end{tabular}
\end{table}

\section{Related work}
\label{sec:related}

Understanding the bandwidth dimensioning is a crucial question in P2P systems, as upload bandwidth is a scarce resource. The bandwidth conservation law~\cite{benbadis08playing} tells that, if all available bandwidth resources can be used to useful content transfer, then the condition for a live streaming system to admit a solution is
\begin{equation}
\alpha_L+\beta\alpha_S+\frac{N_C}{N_L}\geq 1\\ \text{, with }\left\{
\begin{array}{l}
	\alpha_X=\frac{\bar{u}_X}{r}\text{,}\\
	\beta=\frac{N_S}{N_L}\text{.}
\end{array}
\right.
\label{eq:bclperfect}
\end{equation}

In reality, not all bandwidth can be used all the time. Of course, there is the issue of overhead, but other phenomena can prevent from using all available bandwidth. For instance, a peer may have nothing to give at a given time; or some bandwidth may be required for other purposes than feeding the leechers. This explains the concept of efficiency. Taking efficiency into account, equation \eqref{eq:bclperfect} becomes
\begin{equation}
\begin{array}{l}
\eta(L)\alpha_L+\eta(S)\beta\alpha_S+\eta(C)\frac{N_C}{N_L}\geq 1 \text{,}\\
\text{where $\eta(X)$ is the efficiency of set $X$.}
\end{array}
\label{eq:bcleta}
\end{equation}

Efficiency was introduced by Qiu and Srikant~\cite{qiu04modeling} for BitTorrent-like file-sharing systems~\cite{cohen03incentives}.
Its role was to quantify the fact that leechers cannot always upload at full bandwidth capacity, as they may lack the content required by others.

In the case of standard peer-assisted live streaming, with no seeders ($S=\emptyset$), Liu \emph{et al.} have shown that one can reach $\eta(L)=\eta(C)=1$ for perfect and limited fanout systems. In other words, a perfect use of the available bandwidth can be achieved~\cite{liu08performance}.







\section{Defining seeders' efficiencies}
\label{sec:def_eff}

We propose to extend the concept of efficiency to seeders as follows: in a given diffusion scheme $d$, the efficiency $\eta_d(s)$ of a seeder $s$ is the ratio between the data bandwidth it adds to the system and its upload bandwidth $u_s$. In the bandwidth budget, we need to acknowledge that the input rate $i_s$ received by $s$  is ``wasted'' : the rate $i_s$ could have been directly sent to some leechers, but instead it is sent to peer $s$, which does not want to watch the content. We say that $s$ ``removes'' $i_s$ from the pool of useful resources, in a matter of speaking\footnote{In fact, deciding whose peer is responsible for the ``waste'' of $i_s$ is arbitrary, and one could decide to substract $i_s$ from the bandwidth of the senders. However, making the seeders responsible for their own input rates make the analysis simpler.}.
So if in $d$, $s$ transmits at rates $r_{s,p_1}$, \ldots, $r_{s,p_c}$ to $c$ other peers (Figure \ref{fig:efficiency_principle}), its efficiency is

\begin{equation}
\eta_d(s):=\frac{\sum_{k=1}^cr_{s,p_k}-i_s}{u_s}\text{.}
\label{eq:efficiency_general}
\end{equation}

\begin{figure}[htb]
\begin{center}
{\tiny
$$\xymatrix{
& & & *++[o][F]{\txt{Seeder/\\Leecher}}\\
& & & *++[o][F]{\txt{Seeder/\\Leecher}}\\
*++[o][F]{\txt{Seeder/\\Leecher/\\Server}} \ar[r]^{i_s} &
*++[o][F]{\txt{Seeder $s$}} \ar[rruu]^{r_{s,p_1}} \ar[rru]_{r_{s,p_2}} \ar[rr]^{r_{s,p_3}} \ar[rrdd]^{r_{s,p_c}}
& & *++[o][F]{\txt{Seeder/\\Leecher}} \ar@{--}[dd]\\
& & & \\
& & & *++[o][F]{\txt{Seeder/\\Leecher}}\\
 } $$
 }
\caption{Principle of live seeding}
\label{fig:efficiency_principle}
\end{center}
\end{figure}
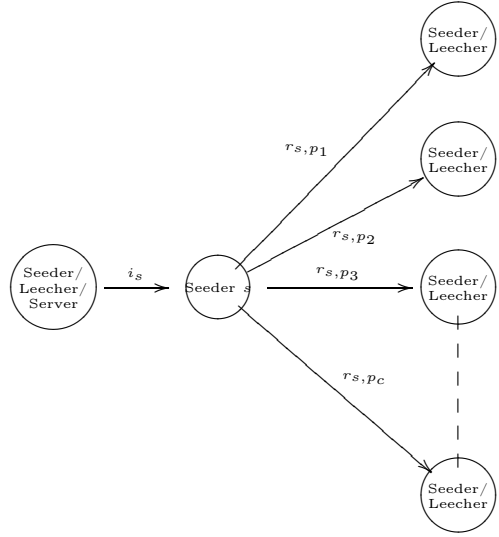

The efficiency of a set $X\subseteq S$ is defined the same way: we consider the difference between what comes out of $X$ and what enters, all reported to capacity:

\begin{equation}
\eta_d(X):=\frac{\sum_{s\in X,q\in\{L,S\setminus X\}}r_{s,q}-\sum_{p\in \{C,L,S\setminus X\},s\in X}r_{p,s}}{U_X}\text{.}
\label{eq:efficiency_general_set1}
\end{equation}

If we add and subtract the term $\sum_{s,t\in X}r_{s,t}$ in the numerator in \eqref{eq:efficiency_general_set1}, we obtain a more compact expression for $\eta_d$:

\begin{equation}
\eta_d(X)=\frac{\sum_{s\in X}\eta_d(s) u_s}{U_X}\text{.}
\label{eq:efficiency_general_set}
\end{equation}

Equation \eqref{eq:efficiency_general_set} tells that $\eta_d(X)$ is also the weighted average of the seeders individual efficiencies.



\subsection{Optimal efficiency}
The optimal efficiency $\eta_{OPT}(s)$ of a seeder $s$ in a given system is defined as the supremum of the efficiencies it can get over all possible diffusion schemes. 
\begin{equation}
\eta_{OPT}(s)=\sup_d(\eta_d(s))
\end{equation}
$\eta_{OPT}(s)$ is an upper bound for the proportion of the upload bandwidth that can be useful for that system.

The same definition stands for the optimal efficiency of any subset $X\subseteq S$:
\begin{equation}
\eta_{OPT}(X)=\sup_d(\eta_d(X))
\end{equation}
However, there is no guarantee that the individual optimal efficiencies of seeders can be aggregated, because they may correspond to distinct schemes (a counter-example is given in Section~\ref{sec:optimal}). As a consequence, Equation \eqref{eq:efficiency_general_set} becomes an inequality when considering optimal efficiency:
\begin{equation}
\eta_{OPT}(X)\leq\frac{\sum_{s\in X}\eta_{OPT}(s)u_s}{U_X}\text{.}
\label{eq:efficiency_opt_def}
\end{equation}



For convenience, subscripts may be omitted when there is no ambiguity. We may also use metonymic notation in order not to clutter notation: $\eta(y)$ may denote the efficiency of a seeder characterised by some property $y$ (like the input rate, upload bandwidth, fanout, \ldots).

\section{Perfect and limited fanout systems}
\label{sec:optimal}


In this section, we derive the optimal efficiency of seeders when there is no explicit overhead.

\subsection{Perfect systems}
\label{sec:ideal}

The optimal performance of seeders in a perfect system is given by the following theorem:
\begin{theorem}
\label{thm:perfect}
The optimal efficiency of a subset $X\subseteq S$ of seeders is
\begin{equation}
\eta(X)=(1-\frac{1}{N_L})\min(1,\frac{N_L r}{U_X})\text{.}
\label{eq:perf_opt}
\end{equation}
\end{theorem}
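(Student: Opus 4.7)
The plan is to show a matching upper bound and achievable construction. I would first rewrite the left-hand side using~(\ref{eq:efficiency_general_set}) as $\eta_d(X)\,U_X = \sum_{s\in X}(o_s - i_s)$, with $o_s := \sum_q r_{s,q}$; the internal $X\!\to\! X$ substreams cancel, so only exchanges with the outside of $X$ affect $\eta_d(X)$. The structural lever of the perfect model is~(\ref{eq:input_feasibility}): $r_{s,q}\le i_s$ forces $o_s \le c_s\,i_s$ whenever $s$ reaches $c_s$ distinct destinations, which gives $o_s - i_s \le u_s(1 - 1/c_s)$.

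For the upper bound, I would argue that only output to leechers can genuinely raise $\eta_d(X)$: the per-leecher cap $\sum_{s\in X} r_{s,\ell}\le r$ yields $O_L := \sum_{s\in X,\,\ell\in L} r_{s,\ell}\le \min(U_X,\,N_L r)$, while the inequality $r_{s,\ell}\le i_s$, averaged over the at most $N_L$ leechers any seeder can reach, gives $\sum_{s\in X} i_s \ge O_L/N_L$. Combining the two yields $\eta_d(X)\,U_X \le O_L(1 - 1/N_L) \le (1 - 1/N_L)\min(U_X,\,N_L r)$, which is exactly~(\ref{eq:perf_opt}).

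For achievability I would construct a scheme in which each $s\in X$ broadcasts a distinct freshly-injected substream to every leecher. When $U_X\le N_L r$, let each $s$ receive $i_s = u_s/N_L$ from a server and send that substream to all $N_L$ leechers at rate $u_s/N_L$, saturating its bandwidth and realising $\eta_d(s) = 1 - 1/N_L$; the per-leecher aggregate $U_X/N_L\le r$ respects the $r$-cap. When $U_X > N_L r$, only $N_L r$ of bandwidth can be usefully injected, so I would scale each seeder's input and output by $N_L r/U_X$ (keeping the broadcast structure), producing $\eta_d(X)\,U_X = (N_L-1)r$ and the second branch of the formula. Any residual leecher demand is covered by servers and leechers, which by~\cite{liu08performance} support $\eta = 1$ and do not interfere with the seeder accounting.

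The delicate point is justifying, in the upper-bound step, that $X$'s useful destinations reduce to leechers: with $S\setminus X\neq\emptyset$, a seeder could a priori fan out to many external seeders and inflate $c_s$ beyond $N_L$. I would handle this via a flow-rerouting argument, collapsing any chain $X\to S\setminus X\to L$ into a direct substream $X\to L$ (preserving $o_s - i_s$ for $s\in X$ while relieving the intermediary), and arguing that any substream from $X$ to $S\setminus X$ that is not eventually forwarded can be dropped without decreasing $\eta_d(X)$. Formalizing this reduction cleanly is where I expect the main effort to lie.
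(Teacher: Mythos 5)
Your achievability half is the paper's construction exactly (server feeds each $s$ a distinct substream of rate $u_s/N_L$, or the scaled version when $U_X>N_Lr$, and $s$ broadcasts it to all leechers), so that part is fine. The upper bound is where your route diverges, and the step you yourself flag as delicate is a genuine gap --- and it is wider than you describe. Your argument is pure flow-counting from the rate constraint~\eqref{eq:input_feasibility}, $r_{s,q}\le i_s$, plus $\sum_{s\in X} i_s\ge O_L/N_L$. But the quantity subtracted in $\eta_d(X)$ is only the input entering $X$ \emph{from outside}, whereas $\sum_{s\in X} i_s$ also counts internal $X\to X$ input. Constraint~\eqref{eq:input_feasibility} alone therefore does not exclude circular schemes: two seeders of $X$ exchanging a substream of rate $\epsilon$ each satisfy $r_{p,q}\le i_p$, have zero external input, and could then each ``send'' $\epsilon$ to every leecher, formally yielding $\eta_d(X)U_X=2N_L\epsilon$ with nothing subtracted --- which would beat the claimed bound. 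The same trick works by bouncing a substream off $S\setminus X$, so your chain-collapsing reduction (which in any case does not preserve bandwidth usage when the intermediary amplifies to many leechers) cannot rescue the argument either.

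What rules all of this out in the paper is a single information-content observation that your proposal is missing: whatever the diffusion graph, the set $X$ possesses at most $\min(I_X,r)$ worth of \emph{distinct} content, where $I_X$ is its input from outside $X$; hence its useful output is at most $\min\bigl(U_X,\ N_L\min(I_X,r)\bigr)$, since each leecher can benefit from at most $\min(I_X,r)$ of it and anything sent to seeders outside $X$ is information the leechers could already obtain. The paper then writes the efficiency bound as a function of the single scalar $I_X$, namely $\min\bigl(1,\ I_X\frac{N_L}{U_X},\ r\frac{N_L}{U_X}\bigr)-\frac{I_X}{U_X}$, and maximizes over $I_X\ge 0$ (optimum at $I_X=U_X/N_L$ or $I_X=r$ depending on whether $U_X\le N_Lr$). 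This one argument simultaneously disposes of internal relaying, external-seeder detours, and redundant duplicate substreams, and is the piece you would need to add; once you have it, your per-leecher cap and the optimization over the input rate follow as in the paper.
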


%

\begin{proof}
First we give a scheme that achieves the efficiency given by \eqref{eq:perf_opt}.
The scheme is the following: each seeder $s\in X$ receives from the servers a distinct substream of rate $\frac{u_s}{N_L}$ (if $U_X\leq N_L r$) or $\frac{u_s}{U_S}r$ (otherwise), and broadcasts that substream to the $N_L$ leechers. Under that scheme, the input received by $X$ from nodes outside $X$ is $\min(\frac{U_X}{N_L},r)$, and the output given to leechers is $\min(U_X,N_L r)$. Subtracting the input from the output and dividing by $U_X$ gives the efficiency $\eta(X)$ from \eqref{eq:perf_opt}.


Then, we need to prove that $\eta(X)$ cannot be greater than $(1-\frac{1}{N_L})\min(1,\frac{N_L r}{U_X})$. If $I_X$ is the input received by $X$ in a given scheme, the corresponding useful output cannot be more that $\min(U_X,\min(I_X,r)N_L)$ because :
\begin{itemize}
	\item $U_X$ is the capacity of $X$;
	\item $\min(I,r)$ is the maximal rate of information that $X$ can get. The best it can achieve is to send that rate to the $N_L$ leechers: sending it to more peers, for instance seeders outside $X$, would be ineffective because all leechers already get the information received by $X$.
\end{itemize}
Given the input and output rates, and according to Equation \eqref{eq:efficiency_general_set1}, the efficiency of $X$ for a given input $I_X$ is bounded by
$$\min(1,I_X\frac{N_L}{U_X},r\frac{N_L}{U_X})-\frac{I_X}{U_X}\text{.}$$
We deduce that the optimal efficiency is bounded by
$$\sup_{I_X \geq 0}\left(\min(1,I_X\frac{N_L}{U_X},r\frac{N_L}{U_X})-\frac{I_X}{U_X}\right)\text{.}$$
If $U_X\leq N_L r$, we get a maximal efficiency $1-\frac{1}{N_L}$ for $I_X=\frac{U_X}{N_L}$, and if $U_X\geq N_L r$, we get $\frac{r(N_L-1)}{U_X}$ for $I_X=r$. Therefore the efficency is never more than $(1-\frac{1}{N_L})\min(1,\frac{N_L r}{U_X})$. This concludes the proof.
\end{proof}

Note that the condition $U_X> N_L r$ corresponds to an overprovisioned system, as the seeders from $X$ have more bandwidth than required to feed the stream $r$ to all leechers by themselves. In the definition of efficiency we proposed, it is normalized by the dedicated upload bandwidth, so overprovisioned systems naturally have lower efficiencies. On the other hand, for any non-overprovisioned system, Equation \eqref{eq:perf_opt} simplifies to

\begin{equation}
\eta(X) = 1-\frac{1}{N_L}\text{.}
\label{eq:eff_ideal}
\end{equation}


In other words, seeders are asymptotically optimal in a perfect P2P live streaming system. The explanation is that the only bandwidth waste boils down to at most one streamrate redirected to them for replication.

\subsection{Limited fanout}
\label{sec:limited}
Each seeder $s$ has now a limited fanout $c_s$. Without loss of generality, we assume that $\forall s\in S, c_s\leq N_L$.
\begin{theorem}
The optimal efficiency of a single seeder $s\in S$ with limited connections $c_s$ is
\begin{equation}
\eta(s)=(1-\frac{1}{c_s})\min(1,\frac{rc_s}{u_s})\text{.}
\label{eq:eff_limited}
\end{equation}
In particular, if $rc_s\geq u_s$ (the fanout is high enough for allowing to use all the upload of $s$), we just have 
\begin{equation}
\eta(s)=1-\frac{1}{c_s}\text{.}
\label{eq:eff_limited_simple}
\end{equation}
\end{theorem}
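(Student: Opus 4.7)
The plan is to mirror the structure of the proof of Theorem~\ref{thm:perfect}: first construct a scheme realising the claimed efficiency, then prove a matching upper bound by carefully bounding the useful output of $s$ as a function of its input rate $i_s$ and optimising over that parameter.

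For the achievability part, I would have $s$ receive, directly from the server(s), a single substream of rate $\rho := \min(u_s/c_s,\ r)$, and then relay that same substream to $c_s$ distinct leechers (possible since $c_s\leq N_L$). The fanout bound is met, the capacity is respected (because $c_s\rho\leq u_s$), and the substream sizes $\rho\leq r$ are legal. The numerator of $\eta(s)$ is $c_s\rho - \rho = (c_s-1)\rho$, so dividing by $u_s$ gives $(1-1/c_s)\min(1,\,rc_s/u_s)$, which matches \eqref{eq:eff_limited}.

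For the upper bound, the key observation is that in any diffusion scheme, every outgoing substream $r_{s,q}$ satisfies $r_{s,q}\leq r$ (definition of a substream) and $r_{s,q}\leq i_s$ (by \eqref{eq:input_feasibility}), while at most $c_s$ of them are non-zero. Therefore the total output of $s$ is bounded by $\min\bigl(u_s,\ c_s\min(i_s,r)\bigr)$, and the efficiency of $s$ for a given input $i_s$ is bounded by
\begin{equation*}
\frac{\min\bigl(u_s,\ c_s\min(i_s,r)\bigr)-i_s}{u_s}.
\end{equation*}
It then remains to maximise this expression over $i_s\geq 0$. I would split into two cases. If $u_s\leq c_s r$, the supremum is reached at $i_s=u_s/c_s$ and equals $1-1/c_s$. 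If $u_s>c_s r$, the supremum is reached at $i_s=r$ and equals $(c_s-1)r/u_s = (1-1/c_s)(rc_s/u_s)$. The two cases together give precisely $(1-1/c_s)\min(1, rc_s/u_s)$, which is the desired bound.

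The argument is essentially a one-seeder, fanout-constrained analogue of the set-level calculation for perfect systems, so I expect no serious obstacle. The only mildly subtle point is justifying that routing extra output through other seeders cannot help: since $\eta(s)$ only compares rates out of $s$ to rates into $s$, and each outgoing substream is individually capped by $\min(i_s,r)$ and the number of them by $c_s$, the bound $c_s\min(i_s,r)$ on total output is valid regardless of who the receivers are. Once this is noted, the case analysis on $i_s$ is routine.
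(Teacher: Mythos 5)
Your proof is correct, but it does not follow the paper's route: the paper's entire proof is a one-line reduction, observing that since $s$ can reach at most $c_s$ peers one may restrict attention to a sub-system consisting of $C$, $s$ and $c_s$ leechers, and then invoke Theorem~\ref{thm:perfect} with $c_s$ in place of $N_L$. You instead redo the argument of Theorem~\ref{thm:perfect} from scratch in the single-seeder, fanout-constrained setting: an explicit server-fed relay scheme at rate $\min(u_s/c_s,\,r)$ for achievability, and a converse via the bound $\min\bigl(u_s,\ c_s\min(i_s,r)\bigr)$ on total output followed by an optimisation over $i_s$ (your two-case analysis at $i_s=u_s/c_s$ and $i_s=r$ is the right one and checks out). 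The two approaches are mathematically equivalent, but yours has a small advantage on the converse side: the per-connection cap $r_{s,q}\le\min(i_s,r)$ combined with the count of at most $c_s$ non-zero outgoing streams bounds the numerator of \eqref{eq:efficiency_general} directly, whatever the identity of the receivers, so you never need to argue---as the reduction implicitly must---that output directed outside the chosen sub-system (e.g.\ to other seeders) cannot raise the efficiency. The paper's reduction buys brevity; your version buys self-containedness and makes explicit the ``extra receivers cannot help'' point that you correctly flag as the only subtle step.
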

\begin{proof}
As $s$ cannot reach more than $c_s$ peers, we just consider a sub-system made of $C$, $s$ and $c_s$ leechers, and we conclude by applying Theorem \ref{thm:perfect}, with $c_s$ instead of $N_L$.
\end{proof}

The bad news is that this result stands for a single seeder, and is not easy to extend to a set of seeders. Equation \eqref{eq:efficiency_opt_def} can be a strict inequality, meaning that efficiency is lost in the process of making multiple seeders work together. Consider for instance a toy system made of $N_L=3$ leechers and two seeders $s_1$ and $s_2$ with parameters $u_1=\frac{3}{2}r$, $c_1=2$, $u_2=r$, $c_2=3$. Using Equation \eqref{eq:eff_limited}, we get $\eta_{OPT}(s_1)=\frac{1}{2}$ and $\eta_{OPT}(s_2)=\frac{2}{3}$, so $$\frac{\eta_{OPT}(s_1)u_1+\eta_{OPT}(s_2)u_2}{u_1+u_2}=\frac{17}{30}\text{.}$$

But if we try to find a scheme that maximize the efficiency of $\{s_1,s_2\}$, the best solution leads to
$$\eta_{OPT}({\{s_1,s_2\}})=\frac{8}{15}<\frac{17}{30}\text{.}$$

The good news is that for specific scenarios, we can have $\eta_{OPT}(X)=\frac{\sum_{s\in X}\eta_{OPT}(s)u_s}{U_X}$. This is for instance the case when $X$ is proportionally homogeneous.

\begin{theorem}
\label{thm:eff_limited_homo}
Consider a set $X\subseteq S$ that is proportionally homogeneous, i.e. there is a rate $e$ so that $u_s=ec_s$ for all $s\in X$. Then, for $N_X\leq \lfloor\frac{N_L-1}{\max_{s\in X}(c_s)-1}\rfloor\lfloor \frac{r}{e} \rfloor$
\begin{equation}
\eta_{OPT}(X)=\frac{\sum_{s\in X}(1-\frac{1}{c_s})u_s}{U_X}=\frac{\sum_{s\in X}\eta_{OPT}(s)u_s}{U_X}\text{.}
\label{eq:eff_limited_homo}
\end{equation}
\end{theorem}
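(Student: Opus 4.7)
The plan is to exhibit a single diffusion scheme in which every seeder $s \in X$ simultaneously attains its individual optimum $\eta_{OPT}(s) = 1 - 1/c_s$; by \eqref{eq:efficiency_general_set} this forces the first equality in \eqref{eq:eff_limited_homo}, while the reverse inequality is immediate from \eqref{eq:efficiency_opt_def}. Note first that a nontrivial hypothesis requires $\lfloor r/e \rfloor \geq 1$, so $e \leq r$ and $rc_s \geq u_s$ for every $s \in X$, which means that \eqref{eq:eff_limited_simple} indeed gives $\eta_{OPT}(s) = 1 - 1/c_s$ (the $\min$ in \eqref{eq:eff_limited} is resolved on the constant side).

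The construction I would use is the following. Split the stream into $k := \lfloor r/e \rfloor$ substreams of rate $e$ (any residual rate $r-ke$ is delivered to leechers directly by the servers and plays no role in the analysis of $X$). Partition $X$ arbitrarily into $k$ groups $X_1,\dots,X_k$ each of cardinality at most $M := \lfloor (N_L-1)/(c_{\max}-1)\rfloor$, which is possible precisely because $N_X \leq kM$ is the hypothesis. Each group $X_i$ is assigned a distinct substream and organised as a rooted in-tree of seeders: the root receives the substream from a server at rate $e$, each non-root seeder receives it from its parent at rate $e$, and every seeder in $X_i$ uses all remaining fanout slots to serve distinct leechers; whichever leechers are not reached by $X_i$ for that substream are supplied by the servers.

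The crucial verification is that no leecher receives the same substream twice. Within $X_i$, a tree on $|X_i|$ nodes has $|X_i|-1$ internal edges, so the number of seeder-to-leecher slots is
\[
\sum_{s\in X_i} c_s - (|X_i|-1) \;=\; 1 + \sum_{s\in X_i}(c_s-1) \;\leq\; 1 + M(c_{\max}-1) \;\leq\; N_L,
\]
so these slots can be injectively mapped to leechers. Each $s \in X_i$ then has input $e = u_s/c_s$ and output $c_s e = u_s$, hence individual efficiency $(u_s-e)/u_s = 1 - 1/c_s$, as required. Plugging back into \eqref{eq:efficiency_general_set} gives the value claimed in \eqref{eq:eff_limited_homo}, and combining \eqref{eq:efficiency_opt_def} with \eqref{eq:eff_limited_simple} supplies the matching upper bound, so the equality closes.

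The main obstacle is the combinatorial bookkeeping of the preceding paragraph: one has to see that routing seeders through an in-tree rather than feeding each of them directly from a server saves exactly one leecher slot per seeder in a group, which is what makes $c_{\max}-1$ (and not the naive $c_{\max}$) the right denominator in $M$. Everything else reduces to rate accounting; in particular, whether the servers and leechers jointly provide enough bandwidth to make the global scheme a solution in the sense of \eqref{eq:def_solution} is not part of the claim, which concerns only $\eta_{OPT}(X)$.
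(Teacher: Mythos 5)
Your proposal is correct and follows essentially the same route as the paper: split the stream into $\lfloor r/e\rfloor$ substreams of rate $e$, assign each seeder as an internal node with exactly $c_s$ children in one diffusion tree per substream, and use the leaf count $1+\sum_{s}(c_s-1)\leq 1+M(c_{\max}-1)\leq N_L$ to justify the bound on $N_X$, with \eqref{eq:efficiency_opt_def} and \eqref{eq:eff_limited_simple} supplying the matching upper bound. Your explicit edge-counting justification of the $c_{\max}-1$ denominator and the remark that $e\leq r$ is forced by a nontrivial hypothesis are slightly more detailed than the paper's wording, but the argument is the same.
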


Note that although we did not precise $e\leq r$, it is an implicit condition: otherwise, the result only apply for $N_X\leq 0$, or in other words, the empty set.

\begin{corollary}
\label{thm:eff_limited_const}
If all seeders in $X$ have the same upload $u$, maximal fanout $c$, and if $N_X\leq \lfloor\frac{N_L-1}{c-1}\rfloor\lfloor \frac{cr}{u} \rfloor$, then
\begin{equation}
\eta_{OPT}(X)=1-\frac{1}{c}\text{.}
\label{eq:eff_limited_const}
\end{equation}
\end{corollary}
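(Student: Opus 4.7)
The plan is to derive this corollary as an immediate specialization of Theorem~\ref{thm:eff_limited_homo}. First I would verify that a set $X$ in which every seeder has the same upload $u$ and the same fanout $c$ is proportionally homogeneous in the sense of Theorem~\ref{thm:eff_limited_homo}: setting $e := u/c$, one has trivially $u_s = u = e c = e c_s$ for every $s \in X$, so the homogeneity assumption holds with this common rate $e$.

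Next I would translate the numerical hypothesis of Theorem~\ref{thm:eff_limited_homo} into the form stated in the corollary. Since $\max_{s\in X} c_s = c$ and $e = u/c$, the condition $N_X \leq \lfloor \frac{N_L-1}{\max_{s\in X}(c_s)-1}\rfloor \lfloor \frac{r}{e}\rfloor$ becomes exactly $N_X \leq \lfloor\frac{N_L-1}{c-1}\rfloor \lfloor\frac{cr}{u}\rfloor$, which is the bound assumed in the corollary. Therefore Theorem~\ref{thm:eff_limited_homo} applies.

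Finally I would simplify the conclusion. Theorem~\ref{thm:eff_limited_homo} gives
\begin{equation*}
\eta_{OPT}(X) = \frac{\sum_{s \in X} (1 - \tfrac{1}{c_s}) u_s}{U_X}.
\end{equation*}
Substituting $c_s = c$ and $u_s = u$ for every $s \in X$, the numerator equals $N_X (1 - 1/c) u$ and the denominator equals $U_X = N_X u$, so the ratio collapses to $1 - 1/c$, yielding \eqref{eq:eff_limited_const}.

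There is essentially no obstacle here: the only care needed is to ensure that the floor-function manipulation is correctly transferred from $\lfloor r/e \rfloor$ (with $e = u/c$) to $\lfloor cr/u \rfloor$, which is immediate since the two expressions are literally the same number. All the substantive work — both the diffusion scheme achieving $1-1/c_s$ per seeder and the proof that individual efficiencies can be simultaneously realized — is already carried out in Theorem~\ref{thm:eff_limited_homo}.
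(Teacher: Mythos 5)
Your proposal is correct and matches the paper's own treatment: the paper disposes of the corollary with the single remark that it ``is just a special case where $e=\frac{u}{c}$ and $\max_{s\in X}(c_s)=c$'' of Theorem~\ref{thm:eff_limited_homo}, and you carry out exactly that specialization, merely spelling out the substitution into the hypothesis and the conclusion. Nothing is missing.
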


\paragraph*{Remark} In the homogeneous case, if we neglect truncation effects, the condition of Corollary \ref{thm:eff_limited_const} corresponds to $U_X\leq (N_L-1)r \frac{c}{c-1}$. As $(N_L-1)\frac{c}{c-1}\geq N_L$ (because $c\leq N_L$), we get the sufficient condition $U_X\leq r N_L$. Therefore, Corollary \ref{thm:eff_limited_const} can be interpreted as follows: in the homogeneous limited fanout model, up to truncation effects, efficiencies can be aggregated without loss for any non-overprovisionned subset $X$.

\begin{proof}
Given Equations \eqref{eq:efficiency_opt_def} and \eqref{eq:eff_limited_simple}, we just need to give a diffusion scheme $d$ such that  $\eta_{d}(X)=\frac{\sum_{s\in X}\eta_{OPT}(s)u_s}{U_X}$.

That diffusion scheme is the following: the streamrate $r$ is divided into $\lfloor\frac{r}{e}\rfloor$ distinct substreams of rate $e$. We then build up to $\lfloor\frac{r}{e}\rfloor$ trees such that: each seeder $s$ in $X$ is an internal node for exactly one tree, having exactly $c_s=\frac{u_s}{e}$ children; the leaves are taken among the leechers; a leecher can belong to several trees, but is contained at most once per tree.

A given tree can have up to $N_L$ leaves, but no more. We deduce that one tree can contain $\lfloor\frac{N_L-1}{\max_{s\in X}(c_s)-1}\rfloor$ seeders, because a tree with $k$ internal nodes (from $X$) has at most $k(\max_{s\in X}(c_s)-1)+1$ leaves. So the rules of the scheme can be respected if $N_X\leq \lfloor\frac{N_L-1}{c-1}\rfloor\lfloor \frac{cr}{u} \rfloor$.

In the corresponding diffusion scheme, where each tree is used to transmit one of the $\lfloor\frac{r}{e}\rfloor$ distinct substreams of rate $e$, we verify that each seeder $s$ works at optimal efficiency $1-\frac{1}{c_s}$. Equation \eqref{eq:efficiency_general_set} concludes the theorem. The corollary is just a special case where $e=\frac{u}{c}$ and $\max_{s\in X}(c_s)=c$.


\end{proof}

\paragraph*{Remark} We can see in the proof that the bound on $N_X$ is actually related to the numbers of seeders that can fit in a tree with the constraints that each seeder $s$ is an internal node with exactly $c_s$ children and there are no more than $N_L$ leaves. The bound we gave is very conservative, because it assumes $\max_{s\in X}(c_s)$ children for all seeders. It may not be tight, especially if $c_s$ spans a wide range. However, finding out the optimal number of seeders that can collaborate at optimal efficiency is difficult, as it is equivalent to solving a multiple knapsack problem.

\section{Explicit overhead}
\label{subsec:optimal}

From now on, we will focus on the explicit linear overhead model, with proportional cost $a$ and additive cost $b$. Under this model, the bandwidth required for sending one copy of the stream through a single connection is $R:=(1+a)r+b$. One easily checks that $\eta_{\max}:=\frac{r}{R}$ is the maximal efficiency achievable in our model for any peer (leecher or seeder).

When illustrating our results with numerical example, we consider a live streaming system with $r=100$ KBytes/s, a proportional overhead of $10\%$ ($a=0.1$), and two possible additive costs, small ($b=1.7$ KBytes/s) and large ($b=25$ KBytes/s). In the figure, we use the \emph{relative} efficiency $\eta/\eta_{\max}$ instead of $\eta$, in order to facilitate the comparison between the two overhead settings.

\subsection{Efficiency of a single seeder: main theorem}

%

The following theorem gives the optimal efficiency of one single seeder when the overhead is linear.

\begin{theorem}
\label{thm:overhead}
If the overhead follows a linear function, then the optimal efficiency of a seeder $s$ is $\eta(s)=\frac{(N_L-1)r}{u_s}$ if $u_s\geq N_L R$. If $u_s< N_L R$, then we have
\begin{equation}
\eta(s) = \left\{\begin{array}{ll}
0 & \text{ if $0\leq u_s\leq 2b$,}\\
\frac{(1-\sqrt{\frac{b}{u_s}})^2}{1+a} - \epsilon_1(u_s) & \text{ if } 2b\leq u_s\leq \frac{R^2}{b}\text{,}\\
\frac{r}{R}-\frac{r}{u_s} - \epsilon_2(u_s)& \text{ if $u_s\geq \frac{R^2}{b}$, with}
\end{array}\right.
\label{eq:eff_overhead}
\end{equation}
\begin{equation*}
\left\{
\begin{array}{l}
0\leq \epsilon_1(u_s) \leq \frac{1}{1+a}\left( \frac{b}{u_s}\right)^{\frac{3}{2}}\text{,}\\
0\leq \epsilon_2(u_s)\leq \frac{1}{1+a}\frac{b}{u_s} \leq \frac{1}{1+a}\left(\frac{b}{R}\right)^2\text{.}
\end{array}
\right.
\end{equation*}
\end{theorem}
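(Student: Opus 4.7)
The plan is to reduce the optimization to two free parameters: the seeder's outgoing fanout $c\in\mathbb{N}$ and its input rate $i_s\in[0,r]$ (the non-overlapping-substream convention of Equation~\eqref{eq:input_sum} and the fact that there is only a single live stream of rate $r$ force $i_s\le r$). For fixed $c$ and $i_s$, the same accounting as in the proof of Theorem~\ref{thm:perfect} shows that the useful output from $s$ is bounded by $\min\bigl(c\,i_s,\,(u_s-bc)/(1+a)\bigr)$: the first term because each of the $c$ receivers gets at most the distinct information $i_s$ held by $s$, the second term being the linear-overhead bandwidth budget. Subtracting $i_s$ and dividing by $u_s$ yields the upper bound on $\eta(s)$. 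Both sides are simultaneously tight in the explicit scheme where a server sends a substream of rate $i_s$ to $s$ which replicates it over $c$ connections to distinct leechers, so the same formula is achievable.

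Then I would optimize over $i_s$ for fixed $c$. The bound is piecewise linear in $i_s$, increasing with slope $(c-1)/u_s$ on the \emph{fanout-limited} region $c\,i_s\le(u_s-bc)/(1+a)$ and decreasing afterwards. Therefore the maximum sits at the knee $i_s^\star=(u_s-bc)/(c(1+a))$ if this is $\le r$ (equivalently $cR>u_s$), giving the bandwidth-limited value $(c-1)(u_s-bc)/(c\,u_s(1+a))$; otherwise $i_s^\star=r$ is optimal, giving the fanout-limited value $(c-1)r/u_s$.

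Next I would optimize over $c$, first relaxed to $\mathbb{R}_{>0}$. The fanout-limited branch is linear increasing in $c$; the bandwidth-limited branch, rewritten as $[(u_s+b)-bc-u_s/c]/(u_s(1+a))$, is strictly concave with unique maximizer $c^\star=\sqrt{u_s/b}$ and maximum $(1-\sqrt{b/u_s})^2/(1+a)$. Comparing $c^\star$ to the regime boundary $u_s/R$ and the cap $c\le N_L$ produces exactly the three branches of the theorem: when $c^\star\ge N_L$ (which forces $u_s\ge N_L R$), the best integer $c$ is $N_L$ in the fanout-limited branch with value $(N_L-1)r/u_s$; when $u_s/R<c^\star<N_L$ (equivalently $u_s<R^2/b$), the continuous optimum is strictly inside the bandwidth-limited branch, giving the second formula; when $c^\star\le u_s/R$ (equivalently $u_s\ge R^2/b$), the bandwidth-limited branch is decreasing past the kink, so the continuous optimum is at $c=u_s/R$ with value $r/R-r/u_s$.

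The final step, which I expect to be the main obstacle, is controlling the integer rounding of $c$ that produces the error terms $\epsilon_1,\epsilon_2$. In the second regime, a Taylor expansion of the concave bandwidth-limited function at $c^\star$, whose second derivative equals $-2(b/u_s)^{3/2}/(1+a)$, together with $|c-c^\star|\le 1$, yields $\epsilon_1\le(b/u_s)^{3/2}/(1+a)$. In the third regime, the continuous optimum sits at the kink between the two branches, so the natural integer candidates are $c=\lfloor u_s/R\rfloor$ on the fanout-limited side and $c=\lceil u_s/R\rceil$ on the bandwidth-limited side; picking whichever is larger and bounding the gap to $r/R-r/u_s$, while using $u_s\ge R^2/b$, should give $\epsilon_2\le b/((1+a)u_s)\le (b/R)^2/(1+a)$. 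Finally, the degenerate range $u_s\le 2b$ is handled by hand: the only integer $c$ compatible with a positive numerator in the general upper bound is $c\ge 2$, which requires $u_s\ge bc\ge 2b$ merely to cover additive overhead, so $\eta(s)=0$ there. The continuous optimization is routine; the care goes into the discretization near the regime boundaries.
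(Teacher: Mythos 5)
Your proposal follows essentially the same route as the paper: for fixed fanout $c$ you optimize over the input rate to get the same $\min$ of a fanout-limited branch $\frac{(c-1)r}{u_s}$ and a bandwidth-limited branch $\frac{1-\frac{1}{c}-\frac{b}{u_s}(c-1)}{1+a}$, then optimize over continuous $c$ (maximizer $\sqrt{u_s/b}$ or the kink $u_s/R$), and finally bound the integer-rounding loss by comparing the continuous optimum with the nearest integer on the concave side, exactly as the paper does via $\eta(c_{OPT})-\eta(c_{OPT}+1)$. The only nit is the parenthetical claim that $c^\star=\sqrt{u_s/b}\ge N_L$ ``forces'' $u_s\ge N_L R$ (it does not in general, and the Taylor-remainder bound should be applied to the ceiling candidate, where the second derivative of the concave branch is smallest), but neither point changes the argument since the overprovisioned branch is hypothesised as $u_s\ge N_L R$ and the one-sided comparison suffices.
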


%

\begin{proof}
The easy part of the proof is for $u_s\geq N_L R$. This corresponds to an overprovisioned situation where $s$ alone can provide the live content to all leechers. This is the optimal scheme for $s$, so it is straightforward that $\eta(s)=\frac{(N_L-1)r}{u_s}$.

Equation \eqref{eq:eff_overhead}, which corresponds to the case $u_s< N_L R$, can be proved in three steps:
\begin{itemize}
\item
Finding the maximal efficiency for a given bandwidth $u$ and fanout $c$;
\item
Maximizing the corresponding equations for a continuous $c$;
\item
Bounding the gap induced by the fact that $c$ has to be an integer.
\end{itemize}
\subsubsection{Maximizing $\eta$ for given $u,c$} we first notice that for achieving maximal efficiency, all output rates have to be equal to the input rate: if it is not the case in a given scheme, replacing all output rates by their average value allows to reduce the input rate to that average value (it had to be greater than the maximal output in the original case), increasing efficiency. Therefore the optimal efficiency must be of the form $\eta(s)=\frac{(c-1)e}{u}$, for some rate $0\leq e \leq r$.
It is then obvious that one have interest to choose the highest value of $e$ that is feasible.

Note that if $c=1$, the seeder can only replicate its input and has null efficiency; the seeder needs to maintain at least $2$ connections with spared bandwidth to have a non-null efficiency. This settles that $\eta=0$ for $u\leq 2b$. Otherwise, two cases are to be considered:
\begin{itemize}
\item
if $c$ is the bottleneck (this happens for $u\geq Rc$), then $s$ has enough bandwidth to broadcast the whole stream $r$ to $c$ targets, achieving efficiency $\frac{(c-1)r}{u}$;
\item
if $u$ is the bottleneck (for $u< Rc$), then the optimal input rate $e$ is solution of $c((1+a)e+b)=u$, leading to $e=\frac{\frac{u}{c}-b}{1+a}$. Corresponding efficiency is
\begin{eqnarray*}
\eta&=&\frac{(c-1)e}{u}
=\frac{(c-1)(\frac{u}{c}-b)}{(1+a)u}\\
&=&\frac{1-\frac{1}{c}-\frac{b}{u}(c-1)}{1+a}
\end{eqnarray*}
\end{itemize}

For $u<RN_L$, the bottleneck is necessary one of the above, so we deduce that the optimal efficiency for given $u$ and $c$ is
\begin{equation}
\label{eq:eta_u_c}
\eta(u,c)=\min(\frac{(c-1)r}{u},\frac{(1-\frac{1}{c})-\frac{b}{u}(c-1)}{1+a})
\end{equation}

\subsubsection{Maximizing $\eta$ for given $u$}
We now see \eqref{eq:eta_u_c} as a function of $c$ and try to find its maximal value. We propose to first solve the problem in $\mathbb{R}$ before considering integers.

We introduce
\begin{eqnarray*}
\eta_1(c)& :=& \frac{(c-1)r}{u}\text{ and} \\
\eta_2(c)& :=& \frac{(1-\frac{1}{c})-\frac{b}{u}(c-1)}{1+a}\text{.}
\end{eqnarray*}

The two functions have the following properties:
\begin{itemize}
	\item $\eta_1$ is always increasing, and positive for $c\geq 1$;
	\item $\eta_2$ goes to $-\infty$ for $c$ going to $0$ and $+\infty$. It has a unique maximum $\frac{(1-\sqrt{\frac{b}{u_s}})^2}{1+a}$, which is reached for $c=\sqrt{\frac{u}{b}}$
	\item $\eta_1=\eta_2$ for $c=1$ (corresponding efficiency is $0$) and $c=\frac{u}{R}$ (corresponding efficiency is $\frac{r}{R}-\frac{r}{u}$).
\end{itemize}

We deduce that the optimal efficiency for given $c$, $\eta=\min(\eta_1,\eta_2)$ is equal to $\eta_1$ for $1\leq c \leq \frac{u}{R}$ and $\eta_2$ for $c\geq \frac{u}{R}$. Two cases are then to be considered:
\begin{itemize}
	\item if $\sqrt{\frac{u}{b}}\leq \frac{u}{R}$ (that is $u\geq \frac{R^2}{b}$), then $\eta$ is increasing for $1\leq c \leq \frac{u}{R}$, decreasing for $c \geq \frac{u}{R}$. The maximal efficiency is therefore  $\frac{r}{R}-\frac{r}{u}$, reached for    $c=\frac{u}{R}$;
	\item  if $\sqrt{\frac{u}{b}}\geq \frac{u}{R}$ (that is $u\leq \frac{R^2}{b}$), then the maximal efficiency is the one of $\eta_2$, $\frac{(1-\sqrt{\frac{b}{u_s}})^2}{1+a}$, reached for $c=\sqrt{\frac{u}{b}}$.
\end{itemize}

%
%
\subsubsection{Bounding the quantification gap}

While the optimal value $c_{OPT}$ we found is a real number, only integer value are eligible. However, as the function $\eta=\min(\eta_1,\eta_2)$ always admits a unique maximum, the effective optimal efficiency $\eta(s)$ is necessarily  $\max(\eta(\lfloor c_{OPT} \rfloor),\eta( \lceil c_{OPT} \rceil))$. In particular, we have
$\eta(c_{OPT}+1)\leq \eta(s) \leq \eta(c_{OPT})$, from which we deduce

$$\eta(s)= \eta(c_{OPT}) -\epsilon\text{, with }0\leq \epsilon \leq \eta(c_{OPT}) - \eta(c_{OPT}+1)$$

From there, noticing that $\eta=\eta_2$ for $c\geq c_{OPT}$, we get 

$$\eta(c_{OPT}) - \eta(c_{OPT}+1)=\frac{\frac{b}{u}-\frac{1}{c_{OPT}(c_{OPT}+1)}}{1+a}\text{.}$$

\begin{itemize}
	\item If $u\leq \frac{R^2}{b}$, then $c_{OPT}=\sqrt{\frac{u}{b}}$, so we get
\begin{eqnarray*}
\eta(c_{OPT}) - \eta(c_{OPT}+1)&=& \frac{\frac{b}{u}(1-\frac{1}{1+\sqrt{\frac{b}{u}}})}{(1+a)}\\
&\leq&\frac{(\frac{b}{u})^{\frac{3}{2}}}{(1+a)}\text{;}
\end{eqnarray*}
\item if $u\geq \frac{R^2}{b}$, we just use 
$$\eta(c_{OPT}) - \eta(c_{OPT}+1)\leq \frac{b}{u(1+a)}\text{,}$$
and note that $\frac{b}{u}\leq (\frac{b}{R})^2$. This concludes the proof. 
\end{itemize}

%
%
%
%

\end{proof}

\subsection{Efficiency of a single seeder: discussion}

Following theorem~\ref{thm:overhead} and proof, the following remarks can be made.

\begin{figure}[ht]%
\centering
\includegraphics[width=.96\columnwidth]{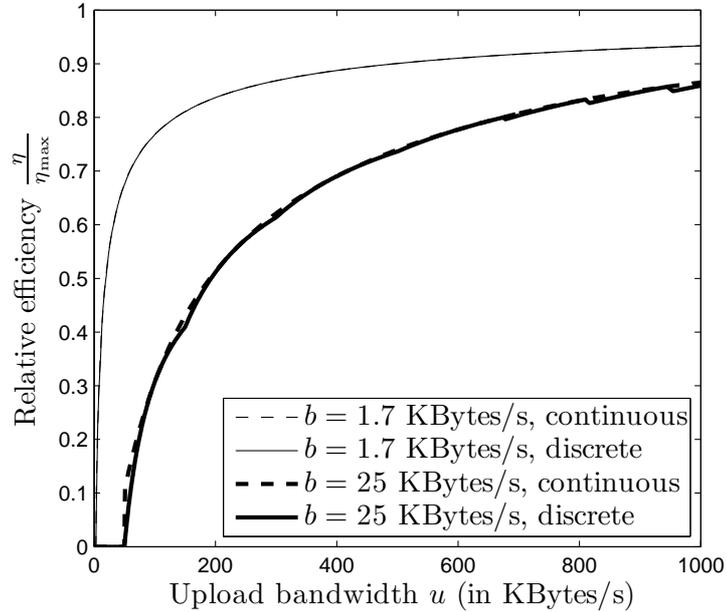} 
\caption{Validity of the continuous approximation of the optimal efficiency}
\label{fig:quantification}%
\end{figure}

\subsubsection{Closed formulas approximation}

the $\epsilon_1$ and $\epsilon_2$ terms are negligible as long as $u_s$ is big enough compared to the additive cost $b$, so in most cases, one can safely use the continuous optimum $\eta(c_{OPT})$ (step \emph{2)} of the proof) instead of the discrete one $\max(\eta(\lfloor c_{OPT} \rfloor),\eta( \lceil c_{OPT} \rceil))$. In other words,
\begin{equation}
\eta(s)\approx \left\{\begin{array}{ll}
\frac{(1-\sqrt{\frac{b}{u_s}})^2}{1+a} & \text{ if } 2b\leq u_s\leq \frac{R^2}{b}\text{,}\\
\eta_{\max}-\frac{r}{u_s} & \text{ if $u_s\geq \frac{R^2}{b}$.}
\end{array}\right.
\label{eq:single_over_approx}
\end{equation}

To illustrate the validity of this approximation, Figure \ref{fig:quantification} compares it to the exact efficiency for the two numerical settings we proposed at the beginning of this Section. We can see that the difference is barely noticeable for a large additive overhead, and invisible for a small one.

%
%

\subsubsection{Low/medium bandwidth}

The case $u_s\leq\frac{R^2}{b}$ can be interpreted as \emph{the upload bandwidth is no more than $\frac{R}{b}$ times the rate $R$}. In most practical situations, one would expect $b\ll R$, so most seeders would probably fall in this case, which corresponds to low, medium and reasonably high bandwidths.

Within this range, it is interesting to note that both the optimal number of connection and corresponding efficiency are independent of $r$. Moreover, one can note that the number of connections, $\sqrt{\frac{u_s}{b}}$, is quite similar to the empirical formula used in the current BitTorrent mainline client, $\sqrt{0.6u}$~\cite{carra08impact}.
This makes us think that the results given here could be adapted to other scenarios than live seeding (this would need to be further investigated in a future work). The $0.6$ factor would corresponds to an additive connection cost $b\approx 1.7$ KBytes/s, which explains why we use this value as one of our numerical settings (the other value, $b=25$  KBytes/s, is totally arbitrary).

\subsubsection{(Very) high bandwidth}

For very high bandwidths (corresponding for instance to seeders managed by some provider), the efficiency tends to $\eta_{\max}$ as $u_s$ goes to infinity (under the assumption that the scenario is not overprovisioned, i.e. $\frac{u_s}{N_L}<R$): super-seeders can asymptotically reach the best achievable efficiency given the overhead constraints.

%
%
%
%
%

\subsubsection{Importance of input shaping}

Seeders do not need to get the whole streamrate. This fact allows to adjust their input rate as desired, which is a key to achieve optimal efficiency.

For instance, under the assumption that the input rate of a seeder $s$ is $r$, one easily checks that its best achievable efficiency is
\begin{equation}
\eta_r(s)=\max(0,\frac{r(\lfloor \frac{u_s}{R} \rfloor-1)}{u_s},\frac{1-\frac{b}{u_s}\lceil\frac{u_s}{R}\rceil}{1+a}-\frac{r}{u_s})
\label{eq:input_r}
\end{equation}
(the case $0$ corresponds to situations where the best choice is not to use $s$, saving the input rate).

\begin{figure}[htb]%
\centering
\includegraphics[width=.96\columnwidth]{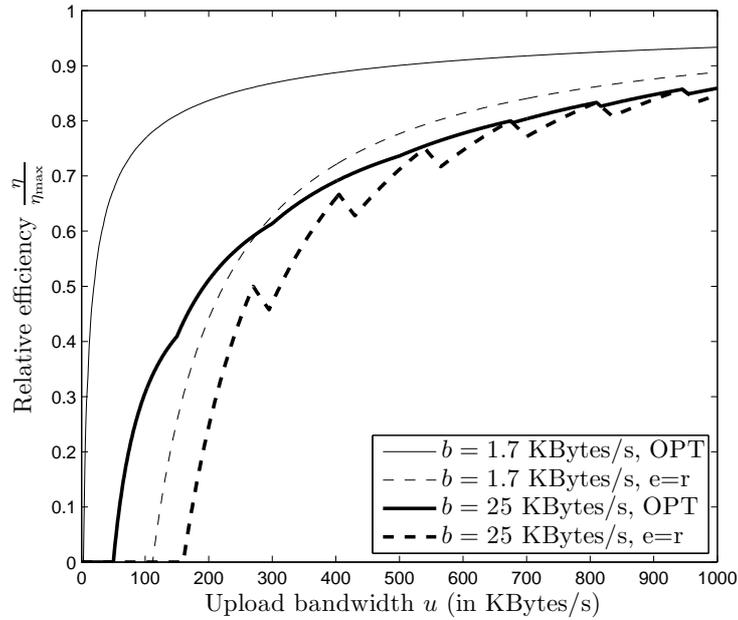} 
\caption{Impact of a badly shaped input rate}
\label{fig:stupid}%
\end{figure}

Figure \ref{fig:stupid} gives a graphical comparison of $\eta_{OPT}$ and $\eta_r$. While seeders with optimized input rates can get a decent efficiency starting from a few $b$'s of upload bandwidth, if the input is $r$, seeders with an upload bandwidth less than $R$ are totally inefficient (they cannot give more than they receive, so the best choice is not to use them). We also notice that the difference remains important even for higher upload bandwidth, especially if the additive overhead is small.

%
%
%


\subsubsection{About receiver-side overhead}
\label{sssec:receiver_overhead}

In our model, we made the assumption that the burden of the overhead was only on the sender. A more general model would consist in assuming that in addition to the sender overhead of parameters $(a,b)$, there is a receiver overhead of parameters $(a_r,b_r)$ (if $p$ receives a streamrate $r_{q,p}$ from $q$, it has to use an upload bandwidth $a_rr_{q,p}+b_r$).

Theorem \ref{thm:overhead} and proof can be adapted to the general model, at the price of increased complexity. For instance, in the medium range scenario ($2b<u_s\leq \frac{R^2}{b}$), we have an optimal (continuous) number of connections

\begin{equation}
c_{OPT}=\sqrt{\frac{u_s}{b}}\text{.}
\label{eq:copt_sender}
\end{equation}

In the general model, this would become 

\begin{equation}
c_{OPT}= \frac{-a_r b + \sqrt{b\left(a + a_r + 1\right) \left(u - b_d - a b_r + a_r b + a u\right)}}{b \left(a + 1\right)}
\text{.}
\label{eq:copt_full}
\end{equation}

\begin{figure}%
\centering
\includegraphics[width=.96\columnwidth]{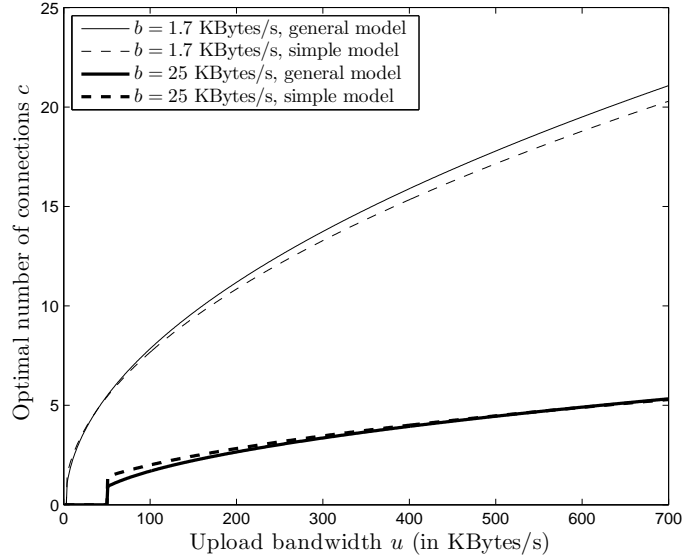} 
\caption{General overhead model vs simple overhead model}
\label{fig:full_overhead}%
\end{figure}

We see that formulas get much more complex in the general model. However, if one compares the practical values given by \eqref{eq:copt_sender} and \eqref{eq:copt_full}, we see that the general behavior remains practically the same. This is depicted in Figure \ref{fig:full_overhead} (receiver overhead is assumed to be the same that the sender overhead, i.e. $a_r:=a$ and $b_r:=b$).

As the added complexity does not seem to bring lot of practical difference, we choose to discard the receiver overhead in our model. However, the reason we can do that is probably that the natural use of live seeders is to feed them with a single input rate, which reduce the impact of receiver overhead. If we want to extend our framework to leechers, which usually receive multiple substreams from multiple sources, a proper modeling of the receiver overhead may become mandatory.


%
%
%
%
%
%
%
%


\subsection{Efficiency of a set of seeders}
\label{sec:efficiency}

Like for the limited fanout model, there is no guarantee that the optimal single efficiencies of seeder can be aggregated in a common scheme. In the following, we propose two heuristics that allow to somehow adapt Theorem \ref{thm:eff_limited_homo} to the overhead model: the mono-rate and dichotomic rates diffusion schemes.


\subsubsection{Mono-rate scheme}

The idea of the mono-rate approach is somehow simple: if a set of seeders agree to a common substream rate $e$, they can behave as a proportionally heterogeneous set. Their efficiency obeys to the following theorem:

\begin{theorem}
\label{thm:mono}
Consider a set $X\subseteq S$ that verifies:
\begin{itemize}
	\item $\bar{u}_X\leq \frac{2R^2}{b}$;
	\item $N_X\leq\lfloor \frac{N_L-1}{\lfloor \frac{\max_{s\in X}(u_s)}{E} \rfloor-1}\rfloor \lfloor \frac{R-b}{E-b} \rfloor$, with $E=\sqrt{\frac{b\bar{u}_X}{2}}$.
\end{itemize}
Then, if all seeders on $X$ agree on a common rate $e:=\frac{E-b}{1+a}$ used for all inputs and outputs, the efficiency $\eta_e(X)$ of the corresponding scheme verifies
\begin{equation}
\frac{(1-\sqrt{\frac{2b}{\bar{u}_X}})^2}{1+a}<\eta_e(X)\leq\frac{(1-\sqrt{\frac{b}{\bar{u}_X}})^2}{1+a}
\label{eq:mono}
\end{equation}
\end{theorem}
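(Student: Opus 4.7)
The plan is to exhibit the mono-rate scheme explicitly, check its feasibility against the two hypotheses, and then read off its efficiency in the spirit of the proof of Theorem~\ref{thm:eff_limited_homo}. Set $E := \sqrt{b\bar{u}_X/2}$ and $e := (E-b)/(1+a)$; let each seeder $s \in X$ maintain $c_s := \lfloor u_s/E \rfloor$ outgoing substreams of rate $e$ while receiving a single incoming substream of rate $e$. A unit connection then costs exactly $(1+a)e+b = E$ units of upload, so the bandwidth constraint $c_s E \le u_s$ holds by construction.

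For this to be a valid live diffusion, first $e \le r$, equivalently $E \le R$, is required, which matches the hypothesis $\bar{u}_X \le 2R^2/b$. Second, I split the full stream of rate $r$ into $\lfloor r/e \rfloor = \lfloor (R-b)/(E-b) \rfloor$ disjoint substreams of rate $e$ and, for each one, build a diffusion tree whose internal nodes are seeders from $X$ (with fanout exactly $c_s$) and whose leaves are leechers, exactly as in the proof of Theorem~\ref{thm:eff_limited_homo}. A tree with $k$ internal seeders has at most $k(\max_{s\in X}c_s - 1) + 1$ leaves, so carrying at most $N_L$ distinct leechers forces $k \le \lfloor (N_L-1)/(\max_{s\in X}c_s - 1) \rfloor$. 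Since $\max_{s\in X}c_s = \lfloor \max_{s\in X}(u_s)/E \rfloor$, summing over all trees yields precisely the second hypothesis as a sufficient packing condition.

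The efficiency then follows directly from Equation~\eqref{eq:efficiency_general_set}: with input $e$ and output $c_s e$ for every $s \in X$, one has $\eta_e(s) = (c_s-1)e/u_s$, and hence
\begin{equation*}
\eta_e(X) = \frac{\sum_{s \in X}(c_s-1)e}{U_X}.
\end{equation*}
Inserting the two-sided bound $u_s/E - 1 < c_s \le u_s/E$ gives
\begin{equation*}
\frac{e}{E} - \frac{2e}{\bar{u}_X} < \eta_e(X) \le \frac{e}{E} - \frac{e}{\bar{u}_X}.
\end{equation*}
With $E = \sqrt{b\bar{u}_X/2}$ one has $b/E = \sqrt{2b/\bar{u}_X}$ and $E/\bar{u}_X = \sqrt{b/(2\bar{u}_X)}$, and substituting $e = (E-b)/(1+a)$ simplifies the lower side to exactly $(1-\sqrt{2b/\bar{u}_X})^2/(1+a)$, while the upper side becomes $[1 - (3/\sqrt{2})\sqrt{b/\bar{u}_X} + b/\bar{u}_X]/(1+a)$, which is dominated by $(1-\sqrt{b/\bar{u}_X})^2/(1+a)$ because $3/\sqrt{2} > 2$. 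This establishes \eqref{eq:mono}.

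The main obstacle will be the packing/feasibility step rather than the algebra. The counting bound for the tree is deliberately conservative (it assumes every seeder sustains the maximal fanout), and one must also check that substreams can be assigned to trees without a seeder being required to receive from itself or from two distinct parents within the same tree, exactly as handled in the proof of Theorem~\ref{thm:eff_limited_homo}. Once the value $E = \sqrt{b\bar{u}_X/2}$ is fixed --- reverse-engineered so that $b/E$ and $2E/\bar{u}_X$ both equal $\sqrt{2b/\bar{u}_X}$, which is what produces the clean square on the lower side --- the final simplification is essentially automatic.
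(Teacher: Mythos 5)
Your proposal is correct and follows essentially the same route as the paper: the same mono-rate scheme with $c_s=\lfloor u_s/E\rfloor$, the same two-sided bound on the floor, the same tree-packing argument borrowed from Theorem~\ref{thm:eff_limited_homo} to justify lossless aggregation, and the same choice $E=\sqrt{b\bar{u}_X/2}$. The only cosmetic difference is that the paper obtains the right-hand side of \eqref{eq:mono} by maximizing $\frac{e}{E}-\frac{e}{\bar{u}_X}$ over all admissible $E$, whereas you evaluate it at the prescribed $E$ and check directly that $\frac{3}{\sqrt{2}}>2$; both are valid and your algebra checks out.
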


\begin{proof}
Consider a given rate $e\leq r$. Call $E:=(1+a)e+b$ the corresponding rate with overhead. The maximal efficiency of a seeder $s$ having $e$ as input and ouputs is reached when $s$ opens the maximal number of outgoing connections allowing to stream $e$. This leads to
$$\eta_e(s)=\frac{(\lfloor\frac{u_s}{E}\rfloor-1)e}{u_s}\text{.}$$
In particular, 
\begin{equation*}
\frac{e}{E}-2\frac{e}{u_s}<\eta_e(s)\leq\frac{e}{E}-\frac{e}{u_s}\text{.}
\end{equation*}
Assume that the number of seeders in $X$ is small enough to allow perfect aggregation of efficiencies, like for Theorem \ref{thm:eff_limited_homo} (the corresponding condition will be derived later). We then have $\eta_e(X)=\frac{\sum_{s\in X}\eta_e(s)u_s}{U_X}$, therefore
\begin{equation*}
\frac{e}{E}-2\frac{e}{\bar{u}_X}<\eta_e(X)\leq\frac{e}{E}-\frac{e}{\bar{u}_X}\text{.}
\end{equation*}
The maximal value of $\frac{e}{E}-\frac{e}{\bar{u}_X}$ is $\frac{(1-\sqrt{\frac{b}{\bar{u}_X}})^2}{1+a}$, proving the right part of \eqref{eq:mono}.
The maximal value of $\frac{e}{E}-2\frac{e}{\bar{u}_X}$ is $\frac{(1-\sqrt{\frac{2b}{\bar{u}_X}})^2}{1+a}$, and it is reached for $E=\sqrt{\frac{b\bar{u}_X}{2}}$. As we have $E\leq R$, this implies $\bar{u}_X\leq \frac{2R^2}{b}$.

We then need to give a sufficient condition for aggregating the efficiencies without losses. We can use the condition from Theorem \ref{thm:eff_limited_homo}, $N_X\leq \lfloor\frac{N_L-1}{\max_{s\in X}(c_s)-1}\rfloor\lfloor \frac{r}{e} \rfloor$. Noticing that $c_s=\lfloor \frac{u_s}{E}\rfloor$ allows to conclude.
\end{proof}

\subsubsection{Dichotomic scheme}

The dichotomic approach consists in the diffusion of several substreams whose rates are dividers of $r$, instead of using a single rate $e$. In details, the predetermined substreams are:
\begin{itemize}
	\item The video stream of rate $r$, which can be split into
	\item $2$ non-overlapping substreams of rate $\frac{r}{2}$, each of which can be split into 2 substreams
	\item \ldots
	\item $2^{k_{\max}}$ non-overlapping substreams of rate $\frac{r}{2^{k_{\max}}}$, for some $k_{\max}\geq 0$.
\end{itemize}

$k$ is called the level of a substream of rate $\frac{r}{2^{k}}$

A seeder $s$ is said to operate at level $k$ if it behaves as follows:
\begin{itemize}
	\item it receives as input a level $k$ substream; let $l:=k$ be his working level;
	\item As long as $s$ has a residual upload bandwidth greater than $b$ and $l\leq k_{\max}$, do:
	\begin{itemize}
	\item if there is not enough residual upload bandwidth to establish a new output of level $l$,
	\item then $l=l+1$ (a children substream of the current level $l$ substream is chosen),
	\item else create a new output of level $l$.
\end{itemize}
\end{itemize}

\begin{figure}%
\centering
\includegraphics[width=.96\columnwidth]{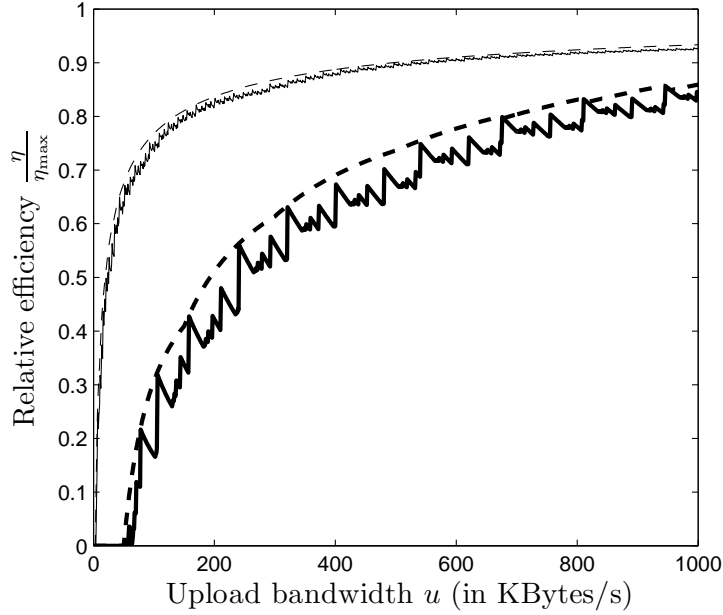} 
\caption{Dichotomic vs optimal individual efficiencies}
\label{fig:bin}%
\end{figure}

The corresponding efficiency is denoted $\eta_k(s)$. In order to optimize the dichotomic approach, each seeder operates at a level that maximizes its single efficiency, i.e. chooses a level $k_s$ such that $\eta_{k_s}(s)=\max_{0\leq k\leq k_{\max}}\eta_k(s)$. The corresponding  efficiency is denoted $\eta_{Bin}(s)$.

As the operating rate is necessarily a divider of $r$, $\eta_{Bin}(s)$ is necessarily suboptimal. However, the different levels allow enough freedom to get an efficiency close enough to be optimal. For instance, Figure \ref{fig:bin} gives a graphical comparison of $\eta_{Bin}(s)$ and $\eta_{OPT}(s)$, using $k_{\max}=\lfloor \log_2(\frac{r}{b}) \rfloor$ (this is an arbitrary choice that corresponds to stopping the subdivision when substreams need more overhead that their actual goodput). One observes that the individual efficiency loss is quite sustainable, especially for a low additive overhead.

For a given set $X$ of seeders, the construction of a dichotomic diffusion scheme is rather simple:
\begin{itemize}
	\item all seeders operating at level $k$ organize to achieve up to $2^k$ diffusions tree for the level $k$; each seeder try to join the level $k$ diffusion tree which currently possesses less leaves.
	\item if a level $k$ seeder has outputs of level $k'>k$, they can either be directly transmitted to leechers or serve as root for a level $k'$ diffusion tree;
	\item if some seeders at level $k$ miss the input streamrate to build their diffusion scheme, they may use a leaf from a parent substream diffusion tree (some of parent rate will be wasted).
\end{itemize}

%

Under some conditions, we can evaluate the efficiency of $X$ under a dichotomic diffusion.

\begin{theorem}
\label{thm:dicho}
If, for a given set $X\subseteq S$, we have $U_X\leq N_L R$, and if all non-empty diffusion trees can be rooted with proper input, then the efficiency $\eta_{Bin}(X)$ of $X$ under a dichotomic diffusion verifies
\begin{equation}
\frac{\sum_{s\in X}\eta_{Bin}(s)u_s}{U_X} - \frac{rk_{\max}}{U_X}\leq \eta_{Bin}(X)\leq \frac{\sum_{s\in X}\eta_{Bin}(s)u_s}{U_X}
\label{eq:dicho}
\end{equation}
\end{theorem}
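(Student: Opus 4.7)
The proof has two halves, one for each inequality in \eqref{eq:dicho}.

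\emph{Upper bound.} The right-hand inequality follows from the identity \eqref{eq:efficiency_general_set} combined with the individual optimality built into $\eta_{Bin}(\cdot)$. Let $d$ be the dichotomic scheme of the theorem: each seeder $s$ is assigned to its personally best level $k_s$, so its individual efficiency in $d$ satisfies $\eta_d(s)\le\eta_{k_s}(s)=\eta_{Bin}(s)$, with strict inequality possible when part of $s$'s output is consumed for rooting a finer-level tree or is redundant with data already delivered elsewhere. Plugging this into \eqref{eq:efficiency_general_set} applied to $d$ immediately gives $\eta_{Bin}(X)=\eta_d(X)\le\frac{\sum_{s\in X}\eta_{Bin}(s)u_s}{U_X}$.

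\emph{Lower bound, construction.} For the reverse inequality I would first verify that the scheme described just before the theorem is well defined under the two stated hypotheses. The rooting assumption supplies every non-empty diffusion tree with a valid input substream, while the bandwidth assumption $U_X\le N_LR$ rules out overprovisioning: the total useful output produced by $X$ is bounded by $U_X\cdot r/R\le N_Lr$, which does not exceed the aggregate leecher demand, so every leaf slot of every tree can be matched to a leecher still missing the corresponding substream and no output is wasted for lack of a consumer.

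\emph{Lower bound, accounting.} It remains to bound $\sum_{s\in X}(\eta_{Bin}(s)-\eta_d(s))u_s$, which by \eqref{eq:efficiency_general_set} equals $\sum_{s\in X}\eta_{Bin}(s)u_s-U_X\,\eta_{Bin}(X)$. The only source of this gap is the boundary between consecutive levels: when the level-$k$ seeders cannot tile their input cleanly, one of their trees has to be rooted from a leaf of a coarser parent tree at level $k'<k$, which delivers a rate strictly greater than $r/2^k$ and therefore carries redundant goodput. A per-level counting shows that the stranded rate at each level $k\in\{1,\ldots,k_{\max}\}$ can be capped by one full streamrate $r$ (one badly fitting substream per level transition is enough in the worst case). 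Summing over the $k_{\max}$ transitions bounds the aggregate loss by $rk_{\max}$; dividing by $U_X$ delivers the left-hand inequality.

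The main technical obstacle is precisely this $r$-ceiling on the per-level boundary waste: one must argue uniformly, regardless of how the upload capacities $u_s$ and the chosen levels $k_s$ are distributed, that gluing the individual trees into a coherent global scheme costs no more than one substream's worth of goodput per level transition. The upper bound and the mere feasibility of the construction, by contrast, follow mechanically once the framework of Section~\ref{sec:def_eff} is in place.
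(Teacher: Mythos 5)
Your overall skeleton matches the paper's: the upper bound follows from \eqref{eq:efficiency_general_set} together with the individual optimality of $\eta_{Bin}(s)$, and the lower bound is obtained by showing the scheme is feasible and then charging at most $r$ of lost goodput per level for rooting finer trees from parent-tree leaves. Note, however, that the per-level $r$-ceiling you single out as the main technical obstacle is in fact the easy part: at level $k>0$ there are at most $2^k$ trees, each root drawing a level-$k$ input of rate $r/2^k$ from a parent leaf, so the total input diverted to roots at that level is at most $2^k\cdot\frac{r}{2^k}=r$, and summing over the levels gives $rk_{\max}$ directly.

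The genuine gap is in your feasibility step. You argue that $U_X\leq N_LR$ bounds the total useful output by $U_X\,r/R\leq N_Lr$, i.e.\ supply does not exceed aggregate demand, and conclude that every leaf slot can be matched to a leecher still missing the corresponding substream. But demand is structured by level and by substream identity: a leecher already served by a level-$(k-1)$ substream no longer offers level-$k$ slots for its children, so a global supply-versus-demand comparison does not by itself rule out stranded output at some intermediate level. The paper closes this with an induction over levels: writing $U_k$ for the residual seeder bandwidth at level $k$, $M_{k-1}$ for the leaf slots actually filled at level $k-1$, and $N_k=2(N_{k-1}-M_{k-1})$ for the slots still available at level $k$, it shows that the invariant $U_{k-1}\leq N_{k-1}\left((1+a)\frac{r}{2^{k-1}}+b\right)$ propagates to $U_k\leq N_k\left((1+a)\frac{r}{2^{k}}+b\right)$, so that at every level the residual bandwidth can be absorbed by available leecher slots. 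That inductive invariant, seeded at level $0$ by the hypothesis $U_X\leq N_LR$, is the missing ingredient; without it your lower-bound accounting does not go through, because it presupposes that the only waste is at the roots.
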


The interpretation is the following: up to a term $\frac{rk_{\max}}{U_X}$, which is small if $U_X$ is big enough, the individual dichotomic efficiencies, which are close to the optimal individual efficiencies, can be aggregated without loss.

\begin{proof}
The condition $U_X\leq N_L R$ ensures that no diffusion tree has more leaves than there are leechers in need of the corresponding substream. This can be shown by induction:
\begin{itemize}
	\item at level $0$, the diffusion tree cannot have more than $\lfloor \frac{U_X}{R}\rfloor$ leaves, which is smaller than $N_L$.
	\item at level $k$, let $U_{k}$ denote the bandwidth that remains after the bandwidth consumed from lower level is substracted; let $N_k$ the maximal number of leechers that can be leaves at that level (a given leecher is counted with multiplicity equal to the number the level $k$ substream it needs; let $M_k$ the number of leechers that get a level $k$ substream (with multiplicity). Note the relation $N_k=2(N_{k-1}-M_{k-1})$, i.e. the maximal number at a given level is twice the slots that have not been filled in the previous level. Assume that $U_{k-1}\leq N_{k-1}((1+a)\frac{r}{2^{k-1}}+b)$, that is at level $k-1$, the residual bandwidth is not overprovisioned compared to the number of possible leaves
Then we have

	\begin{eqnarray*}
	U_k&\leq & U_{k-1}-M_{k-1}((1+a)\frac{r}{2^{k-1}}+b)\\
	& \leq & (N_{k-1}-M_{k-1}) ((1+a)\frac{r}{2^{k-1}}+b)\\
	& \leq & N_k ((1+a)\frac{r}{2^{k}}+\frac{b}{2}) \leq N_k ((1+a)\frac{r}{2^{k}}+b)
\end{eqnarray*}

\end{itemize}

So at any given level, a diffusion tree can always find a leecher to give its output to. Therefore the only waste compared with individual efficiencies lies when the root input of a tree comes from a parent substream. This is bounded by $r$ when considering all roots at a given level $k>0$, leading to a total waste bounded by $rk_{\max}$. Normalizing by $U_X$ concludes the proof.
\end{proof}

%
%
%
%
%

\subsubsection{Comparison of the two methods}

The mono-rate approach is simple to describe, which makes it a good proof of concept of using multiple seeders in a system with overhead.
However, the dichotomic approach, although more complex, has many advantages over the mono-rate approach that make it more suitable for a practical use.

Firstly, the substreams are pre-determined, while mono-rate requires to determine the proper input rate $e$, which depends on $\bar{u}_X$. Among other things, this facilitate considerably the interaction with the leechers' diffusion process. 
 Furthermore, under the dichotomic approach, a seeder $s$ can determine its operating level by itself (it is just a function of $u_s$) while in the mono-rate approach, knowing $\bar{u}_X$ implies some knowledge of the whole set $X$. This is even worse when considering dynamics in $X$: A change in $e=f(\bar{u}_X)$ requires a complete upset of the diffusion trees in the mono-rate approach, while changes are expected to be mostly local in the dichotomic approach. 

Also note that as streamrate are dividers of $r$, the quantification effect $\lfloor \frac{r}{e} \rfloor$ that may limit the mono-rate approach (cf Theorem \ref{thm:mono}) has no equivalent in the dichotomic approach.

Finally, the mono-rate approach can force lot of seeders to use an input rate that is far from the single seeder optimal. This impact is bounded (cf Theorem \ref{thm:mono}), but can be non negligible, especially if the seeders' bandwidths are highly heterogeneous. In contrast, the dichotomic approach adjusts afor each seeder $s$ a level $k_s$ such that the input rate is to far from the optimal.

\section{Discussion}
\label{sec:discussion}

\subsection{Leecher diffusion process}

We did not consider in details the way to make  the diffusion processes of leechers and seeders work together. This is a problem in itself, which deserves a separate study. 
The study performed in \cite{liu08performance} seems to be adaptable to the case with seeders, at least for the limited fanout model, but a further work is required to transpose the results to the overhead model (including keeping in mind the existence of receiver-side overhead).

However, we argue that knowing how to optimize the diffusion process of seeders alone is not a bad starting point.

\subsection{Make a minimal use of seeders}

While all this paper is devoted to make the best possible use of seeders, we should recall that in the design of a real system, targeting the maximal seeder efficiency is not necessarily the smartest thing to do.

In fact, seeders ``waste'' their input rate by design, which makes them inherently less efficient that leechers. Therefore, one should use seeders as minimally as possible. The proper way to use seeders is:
\begin{itemize}
	\item Try to achieve the most of the content diffusion by using the servers and leechers alone. If possible, the leechers should perform a lossless diffusion of a common substream of rate $r'\leq r$ among all of them instead of a partial or lossy diffusion of rate $r$;
	\item if $r'<r$, use seeders to finish the job. This is were the results of this paper apply, which describe the best one can expect from seeders and how to achieve it.
\end{itemize}

%


\subsection{Application: dimensioning a scalable live streaming system}
\label{sec:application}
%
%
%


Many dimensioning rules can be derived by using the formulas we proposed. For instance, determining if the system is scalable would consist in checking if $\eta(L)\alpha_L+\eta(S)\beta\alpha_S\geq 1$ \cite{benbadis08playing}. If we assume here for simplicity homogeneous bandwidth $u$, $\eta(S)=\eta_{OPT}(u)$ (neglecting aggregation issues), and optimal leechers' efficiency $\eta_L=\eta_{\max}$\footnote{The efficiency of leechers should take into account the number of outgoing connections like we did for the seeders. However, $\eta_L$ is not the main matter of this paper, so we assume without remorse perfect efficiency $\eta_{\max}$.}, one can derive the relationship that $u$ and $\beta$ must verify for the system to be scalable:

\begin{equation}
\beta \eta_{OPT}(u)\geq \frac{r}{u}-\eta_{\max}\text{.}
\label{eq:u_beta}
\end{equation}

If $\beta$, which indicates the ratio between idle (seeders) and active (leechers) users, is a given parameter of the system, Equation \eqref{eq:u_beta} can be used to derive the bandwidth $u$ that is required for the system to be scalable. This is illustrated by Figure \ref{fig:target} (the performance of the perfect system, i.e. $a=b=0$, is also plotted for comparison). Notice how even little values of $\beta$ (less than $1$) can give significant decrease of the required bandwidth, which is $R$ for a seedless system with perfectly efficient leechers.

\begin{figure}[!t]%
\centering
\includegraphics[width=.96\columnwidth]{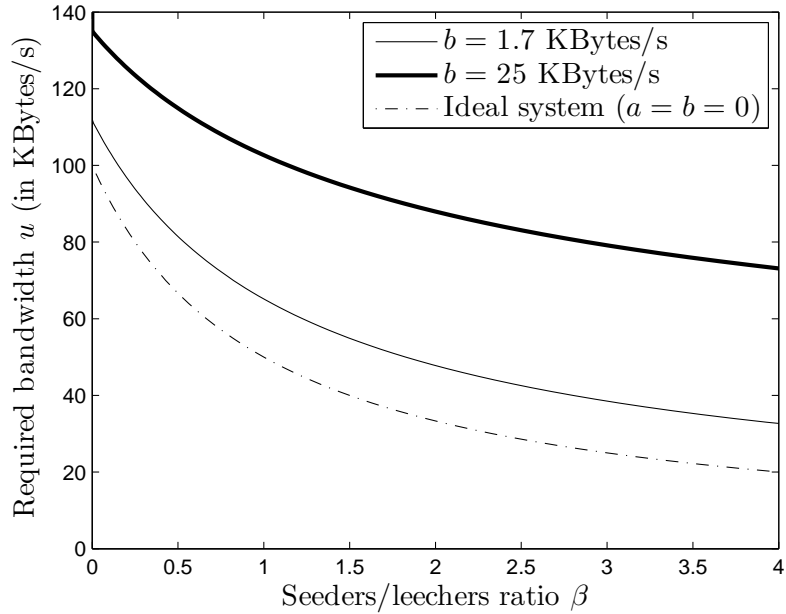} 
\caption{Average bandwidth required for scalability}
\label{fig:target}%
\end{figure}

%
%

\subsection{About delays}

We do not have taken delay issues into account. The diffusion delay is obviously a major concern in the design of a live streaming system. However, it should be noted that the two heuristics we proposed are based on diffusion trees. Therefore the induced delay is at most equal to the delay of a single connection times a logarithm of $N_L$. This is exactly the same type of delay that is experienced for diffusion based on leechers only, so we argue that using seeders should not impact the delay performance of a P2P live streaming system.

\section{Conclusion}
\label{sec:conclusion}

In this paper, we gave the keys to understand how seeders could be used in P2P live streaming if servers and leechers do not suffice. After a preliminary work on perfect and limited-fanout systems, we conducted our study on a model with linear overhead. Although this is a preliminary study, with results that are more theoretical than practical, we believe that the present work may have a significant impact in the design and dimensioning of live streaming systems using seeders.

In a future work, we plan to pursue the matter of leechers/seeders interaction in the general overhead model. We also think that the concept of live seeders introduced here could be extended to a more general concept of half-seeders, i.e. seeders with not all resources expected from a traditional seeder. Studying half-seeders could allow to extend our results to all P2P content distribution systems, including file-sharing and Video-on-Demand systems.

\bibliographystyle{plain}
\bibliography{liveseed}

\end{document}